\newtheorem{theorem}{Theorem}[section]
\newtheorem{definition}[theorem]{Definition}
\newtheorem{remark}[theorem]{Remark}
\newtheorem{lemma}[theorem]{Lemma}
\newtheorem{corollary}[theorem]{Corollary}
\newtheorem{fact}[theorem]{Fact}
\newcommand{\secret}[1]{}
\begin{document}

\title{Belief, knowledge and evidence}

\author{Steffen Lewitzka\thanks{Universidade Federal da Bahia -- UFBA,
Instituto de Computação,
Departamento de Ciência da Computação,
40170-110 Salvador BA,
Brazil,
steffenlewitzka@web.de} 
\text{ }and Vinícius Pinto\thanks{Universidade Federal da Bahia -- UFBA,
Instituto de Computação,
Departamento de Ciência da Computação,
40170-110 Salvador BA,
Brazil}}

\maketitle

\begin{abstract}
We present a logical system that combines the well-known classical epistemic concepts of belief and knowledge with a concept of evidence such that the intuitive principle \textit{`evidence yields belief and knowledge'} is satisfied. Our approach relies on previous works of the first author \cite{lewjlc2, lewigpl, lewapal} who introduced a modal system containing $S5$-style principles for the reasoning about intutionistic truth (i.e. \textit{proof}) and, inspired by \cite{artpro}, combined that system with concepts of \textit{intuitionistic} belief and knowledge. We consider that combined system and replace the constructive concept of \textit{proof} with a classical notion of \textit{evidence}. This results in a logic that combines modal system $S5$ with classical epistemic principles where $\square\varphi$ reads as `$\varphi$ is evident' in an epistemic sense. Inspired by \cite{lewapal}, and in contrast to the usual possible worlds semantics found in the literature, we propose here a relational, frame-based semantics where belief and knowledge are not modeled via accessibility relations but directly as sets of propositions (sets of sets of worlds).      
\end{abstract}

\section{Introduction}

The goal of this work is to combine the well-known epistemic concepts of belief and knowledge, as axiomatized in classical modal epistemic logic (see, e.g. \cite{fag, mey}), with an appropriate concept of \textit{evidence} (or \textit{certainty}), to formalize that combination by a single logical system and to present an intuitive semantics. 

An introduction of \textit{evidence} (or \textit{certainty}) into the context of classical epistemic logic requires some conceptual clarifications. Which are the basic intuitions and principles behind our notion of evidence? How these intuitions can be adequately formalized (axiomatized) in combination with the usual epistemic concepts of belief and knowledge? 

Intuitively, evidence can be regarded as a kind (a strong form) of reason or justification for knowledge and belief. There are also approaches found in the philosophical literature where evidence is identified with knowledge itself.\footnote{We do not aim for a deeper, qualified philosophical discussion. Our intention here is to present a very basic clarifaction and justification of our conceptual approach.} In any case, the implication 
\begin{equation*}
    \text{evidence }\Longrightarrow \text{ knowledge, belief and truth}
\end{equation*}
is surely acceptable. Moreover, we shall assume here that evidence differs from knowledge. Of course, in a constructive, intuitionistic setting, evidence or certainty can be understood as effected constructions, as \textit{proof} in the intuitive sense of BHK (Brouwer-Heyting-Kolmogorov) interpretation, i.e. as intuitionistic truth itself. If we interpret evidence in this constructive way as intuitionistic truth, then Intuitionistic Epistemic Logic (IEL) introduced by Artemov and Protopopescu \cite{artpro} can be seen as a combination of evidence (\textit{proof}) with belief and knowledge. In IEL, the epistemic concepts are formalized in accordance with BHK interpretation of intuitionistic logic. Knowledge of proposition $\varphi$ is given as a \textit{verification} of $\varphi$ (in an intuitive sense). Since \textit{proof} is regarded as the strongest kind of \textit{verification}, \textit{proof} yields knowledge and belief. This fundamental principle of IEL is formalized by the axiom of co-reflection $\varphi\rightarrow K\varphi$.

While \textit{proof} might be a convincing interpretation of the concept of \textit{evidence} (or \textit{certainty}) in an intuitionistic epistemic context, the question arises if some appropriate classical counterpart for classical epistemic logic can be found.\footnote{Of course, the concept of classical truth is not a candidate: the classical principle of reflection $K\varphi\rightarrow\varphi$, `knowledge yields classical truth' would imply that knowledge and evidence are equivalent concepts.}

In recent works \cite{lewjlc2, lewapal}, the first author of this article developed a hierarchy of Lewis-style modal logics for the reasoning about intuitionistic truth, i.e. \textit{proof}. In \cite{lewapal} it is argued that $L5$, the strongest logic of that hierarchy, is the most adequate system for the reasoning about proof since its $S5$-style axioms are intuitionistically acceptable in the sense of an extended BHK interpretation; moreover, $L5$ has a relational semantics based on intuitionistic frames which reflects intuitionistic reasoning in the usual and natural way. An important idea underlying that approach is the distinction between \textit{actual proofs} (effected constructions which are immediately accessible) and \textit{possible proofs} (conditions on constructions which must not be in conflict with effected constructions, i.e. actual proofs). Classically, formula $\square\varphi$ expresses that $\varphi$ is intuitionistically true, i.e. $\varphi$ has an actual proof. However, we also need an intuitionistic interpretation of $\square\varphi$ (consider, e.g. formula $\square\square\varphi$ claiming that $\square\varphi$ holds intuitionistically). The following clause for $\square\varphi$ extends the standard BHK interpretation of intuitionistic logic: ``a proof of $\square\varphi$ is given by a presentation of an actual proof of $\varphi$."\\

The axioms of modal system $L5$ are given by the following schemes:

\begin{itemize}
    \item[] (INT)  All theorems of \textit{intuitionistic} propositional logic IPC
    \item[] (A1)  $\square(\varphi\vee\psi)\rightarrow (\square\varphi\vee\square\psi)$ (Disjunction Property)
    \item[] (A2) $\square\varphi\rightarrow\varphi$ (Intuitionistic truth yields classical truth)
    \item[] (A3) $\square(\varphi\rightarrow\psi)\rightarrow (\square\varphi\rightarrow\square\psi)$ (distribution)
    \item[] (A4)  $\square\varphi\rightarrow\square\square\varphi$ 
    \item[] (A5)  $\neg\square\varphi\rightarrow\square\neg\square\varphi$
    \item[] (TND) $\varphi\vee\neg\varphi$ (Tertium non Datur)
\end{itemize}

The rules of inference are modus ponens MP and intuitionistic Axiom Necessitation AN: ``If $\varphi$ is an intuitionistically acceptable axiom (i.e. an instance of (INT) or of (A1)--(A5)), then infer $\square\varphi$ as a theorem." \\

Actually, the $S5$-style modal axioms of $L5$ are intuitionistically acceptable, i.e. justifiable on the base of the extended BHK interpretation. For instance, a proof of $\square\varphi\rightarrow\varphi$ is a construction that converts any proof of $\square\varphi$ into a proof of $\varphi$. Of course, by the extended BHK interpretation, such a construction is immediately accessible, i.e. that formula has always an actual proof and is intuitionistically valid. The presentation of an actual proof should be regarded itself as an actual proof, and an actual proof can always be presented (as it is immediately accessible). This yields a constructive justification of axiom (A4) $\square\varphi\rightarrow\square\square\varphi$.\\

A justification of principle (A5) in the sense of extended BHK interpretation is a little more complicated. However, it is important for the understanding of our interpretation of \textit{evidence} as a classical substitute for \textit{proof}, so let us examine it. First, we note that the following principle (1) is valid:
\begin{equation}\label{10}
\text{``If $\varphi$ has no actual proof, then a proof of $\square\varphi$ is impossible."}
\end{equation}

In fact, a possible proof (i.e. conditions on the construction) of $\square\varphi$ would in particular involve the necessary condition that $\varphi$ has an actual proof. By hypothesis, such an actual proof does not exist. So the above principle holds. It can be formalized in classical logic as 
\begin{equation}\label{15}
\neg\square\varphi\rightarrow\neg\Diamond\square\varphi,
\end{equation}
where $\Diamond$ is the modal operator of possibility defined by $\Diamond\varphi :=\neg\square\neg\varphi$. In the context of logic $L5$, $\Diamond\varphi$ reads ``$\varphi$ has a possible proof". The above formula then is equivalent to $\neg\square\varphi\rightarrow\square\neg\square\varphi$, i.e. to axiom (A5). This is a justification of modal principle (A5) under the classical interpretation. Below, we show that (A5) is also intuitionistically acceptable.\\

(1) implies the following
\begin{equation}\label{20}
\text{``Either $\square\varphi$ has an actual proof or $\neg\square\varphi$ has an actual proof."}
\end{equation}
In fact, if $\square\varphi$ has no actual proof, then $\varphi$ has no actual proof and principle (1) yields the impossibility of a proof of $\square\varphi$. But if $\square\varphi$ has no possible proof, then the identity function, as an immediately given construction, constitutes an actual proof of $\neg\square\varphi = \square\varphi\rightarrow\bot$.\footnote{We assume here that the connective for negation is not primitive but defined in terms of implication and falsum by $\neg\psi := (\psi\rightarrow \bot)$.} Thus, (3) holds.\\

Now let us suppose $s$ is a proof of $\neg\square\varphi$, i.e. of $\square\varphi\rightarrow\bot$. If $s$ is only a possible proof, then it involves the condition that $\square\varphi$ has no proof. As a possible proof, $s$ must not be in conflict with actual proofs. Then, by principle (3), $\neg\square\varphi = \square\varphi\rightarrow\bot$ has an actual proof. That actual proof can be examplified by the identity function. Its presentation constitutes an actual proof of $\square\neg\square\varphi$. We have presented a construction that converts any proof of $\neg\square\varphi$ into a proof of $\square\neg\square\varphi$. Thus, axiom (A5) is intuitionistically acceptable in the sense of extended BHK interpretation.\\
 
It is important to notice that, although its main subject is intuitionistic proof, $L5$ is a classical logic: classical truth of $\square\varphi$ means that $\varphi$ is intuitionistically true. Lewitzka \cite{lewigpl} combines $L5$ with adapted concepts of intuitionistic belief and knowledge which are inspired by Intuitionistic Epistemic Logic \cite{artpro}. Some of the resulting principles (in the extended language of epistemic logic) are the following:
\begin{itemize}
    \item[] (A6) $K(\varphi\rightarrow\psi)\rightarrow (K\varphi\rightarrow K\psi)$
    \item[] (A7) $\square\varphi\rightarrow \square K\varphi$ (this axiom can be equivalently replaced by $\square\varphi\rightarrow K\varphi$)
    \item[] (A8) $K\varphi \rightarrow \neg\neg\varphi$
\end{itemize}

These principles are expected to be intuitionistically valid. For instance, (A8) expresses that a known proposition $\varphi$ cannot be intuitionistically false (there is no proof showing that $\varphi$ is false).
We consider that combined system (the rule of intuitionistic Axiom Necessitation AN now applies to all axioms (INT), (A1)--(A8) except of (TND)) and ask the following question: Is there any reasonable replacement of the constructive concept of \textit{proof} by a classical notion of \textit{evidence} or \textit{certainty} such that the above (equivalent) principles (1)--(3) remain preserved? Such a question is natural and interesting since there are many examples where constructive or intuitionistic formalizations are extended by stronger principles which then result in a classical specification (the most prominent and probably most basic example is the extension of intuitionistic propositional logic to classical propositional logic). We propose here the following solution to our question. \\

We shall interpret the formula $\square \varphi$ as ``$\varphi$ is evident" in some intuitive, epistemic sense such that particularly the laws of classical logic are considered as evident. So the original interpretation of $\square\varphi$ as ``$\varphi$ has an actual proof" is replaced by ``$\varphi$ is evident" or ``there is evidence (certainty) of $\varphi$". Principle (1) above then reads as

\begin{equation}\label{30}
\text{``If $\varphi$ is not evident, then evidence of $\varphi$ is impossible.", i.e. }\neg\square\varphi\rightarrow\neg\Diamond\square\varphi.
\end{equation}

Note that the expression ``a proof of $\square\varphi$ is impossible" given in (1) should be translated into the classical context as ``$\square\varphi$ is impossible", i.e. ``evidence of $\varphi$ is impossible".\footnote{Of course, even if evidence of $\varphi$ is impossible, (classical) truth of $\varphi$ may be possible, i.e. $\Diamond\varphi$ may hold.}

In this way, we are able to justify and preserve modal principle (A5) in this classical context where $\square\varphi$ is interpreted as ``$\varphi$ is evident". This provides a rather static character of the concept of \textit{evidence}: If a proposition $\varphi$ is evident, then its evidence is immediately given and accessible (`here and now'); otherwise, evidence of $\varphi$ can never be established (in no accessible world of the given model), i.e. evidence is impossible. This rather strong principle, inherent in classical modal logic $S5$, is obtained by the preservation of some of the $L5$-principles (which are in accordance with BHK interpretation) for the reasoning about \textit{proof} discussed above. Also the axioms (A2)--(A4) are plausible under the new interpretation of $\square\varphi$ (since evidence of $\varphi$ is an immediately accessible fact, we may assume that it is itself evident; this can be seen as a justification for axiom (A4)). The only axiom that we have to reject is the axiom of disjunction (A1). In fact, we expect that all  classical tautologies are evident, in particular $x\vee\neg x$. But evidence of $x\vee\neg x$ cannot ensure the evidence of neither $x$ nor $\neg x$.

\section{Language and calculus}

As a name for our logical system, we propose the expression $\mathit{S5BKE}$ (`modal system $S5$ combined with belief, knowledge and evidence'). The set $Fm$ of formulas of $\mathit{S5BKE}$ is in the usual way inductively defined over a set of propositional variables $V=\{x_0, x_1,...\}$, logical connectives $\neg,\rightarrow$ and operators $\square, K, B$ for evidence, knowledge and belief, respectively. The axioms of  $\mathit{S5BKE}$ are given by the following schemes:

\begin{enumerate}
\item every formula having the form of a classical tautology is an axiom
\item $K\varphi\rightarrow\varphi$ (factivity)
\item $K\varphi \rightarrow B\varphi$ (knowledge implies belief)
\item $\square\varphi\rightarrow\square\square\varphi$
\item $\neg\square\varphi\rightarrow\square\neg\square\varphi$
\item $\square\varphi\rightarrow K\varphi$ (evidence yields knowledge, belief and truth) 
\item $K(\varphi\rightarrow\psi)\rightarrow (K\varphi\rightarrow K\psi)$ (distribution of knowledge)
\item $B(\varphi\rightarrow\psi)\rightarrow (B\varphi\rightarrow B\psi)$ (distribution of belief)
\item $\square(\varphi\rightarrow\psi)\rightarrow (\square\varphi\rightarrow\square\psi)$ (distribution of evidence)
\item $\neg B\bot$ (consistency of belief)
\end{enumerate}

Of course, this is only a minimal system of basic epistemic principles. Stronger axioms can be added (such as positive and negative introspection as well as more axioms describing the interplay between belief and knowledge, etc.) and be modeled by our semantic frameworks presented below.
Consistency of belief (axiom (x)) is generally assumed in approaches found in the literature. However, we can drop it without any problems for the resulting semantics: only minor modifications would be necessary.\\

The rules of inference are modus ponens MP and axiom necessitation AN: ``If $\varphi$ is an axiom, then $\square\varphi$ is a theorem". The notion of derivation is defined in the usual way:

\begin{definition}\label{100}
A derivation (proof, deduction) of a formula $\varphi$ from a set $\varPhi$ of formulas is a finite sequence of formulas $\varphi_1,...,\varphi_n = \varphi$ such that for every $i = 1,...,n$, $\varphi_i\in \varPhi$ or $\varphi_i$ is an axiom or $\varphi_i$ is the result of an application of modus ponens to formulas $\varphi_j$ e $\varphi_k=\varphi_j\rightarrow\varphi_i$, onde $1\le j,k < i$, or $\varphi_i$ is the result of an application of the rule of axiom necessitation to $\varphi_j$, $1\le j <i$, (in this case, $\varphi_j$ is an axiom). The number $n$ is the length of the derivation. We say that $\varphi$ is defivable from $\varPhi$ if there is a derivation of $\varphi$ from $\varPhi$, notation: $\varPhi\vdash\varphi$. A theorem is a formula derivable from the empty set. Instead of $\varnothing\vdash\varphi$ we write $\vdash\varphi$.
\end{definition}

\section{Semantics}

For Boolean algebras, we use the following notation: $\mathcal{B}=(B,\vee,\wedge,\neg,0,1)$, where $B$ is the universe and $\vee,\wedge,\neg,0,1$ are the usual operations for join, meet, complement, smallest and greatest element, respectively. Additional Boolean functions are defined in the obvious way, e.g. $a\rightarrow b :=\neg a \vee b$. Note that we do not distinguish between the symbols (logical connectives) of our object language $\neg,\rightarrow$ and corresponding symbols for the functions of an algebraic structure. The difference is generally clear by the context and we hope there is no risk of confusion. If we wish to emphasize the given algebraic structure $\mathcal{B}$, we may write $\vee^\mathcal{B}$, $\wedge^\mathcal{B}$, ... for the algebraic operations. 

\begin{definition}\label{500}
A model $\mathcal{M}$ is given by a Boolean algebra with a designated ultrafilter $\mathit{TRUE}$ and operators $\square$, $B$, $K$ for evidence, belief and knowledge, respectively, i.e. a structure $\mathcal{M}=(M,\mathit{TRUE},\vee,\wedge,\neg,0,1,\square,B,K)$, such that the following conditions are satisfied:
\begin{enumerate}
\item For any $a\in M$,
\begin{equation*}
\begin{split}
\square a =
\begin{cases}
&1,\text{ if }a=1\\
&0,\text{ else}.
\end{cases}
\end{split}
\end{equation*}
\item For every ultrafilter $U$ of the Boolean algebra, the sets 
\begin{itemize}
\item $\mathit{KNOW}(U):=\{a\in M\mid Ka\in U\}$ (``knowledge at $U$") 
\item $\mathit{BEL}(U):=\{a\in M\mid Ba\in U\}$ (``belief at $U$") 
\end{itemize}
are proper filters such that $\mathit{KNOW}(U)\subseteq U\cap \mathit{BEL}(U)$.\footnote{Note that the condition $\mathit{KNOW}(U)\subseteq U$ implies in particular that the filter $\mathit{KNOW}(U)$ is proper. If we wish to drop the consistency axiom of belief, $\neg B_\bot$, then we should admit here that $\mathit{BEL}(U)$ is not proper.}
\end{enumerate}
We refer to the elements $m\in M$ of the Boolean algebra as propositions. The designated ultrafilter $\mathit{TRUE}\subseteq M$ represents the set of true propositions. The sets of known and believed propositions are given by $\mathit{KNOW}:=\mathit{KNOW}(\mathit{TRUE})$ and $\mathit{BEL}:=\mathit{BEL}(\mathit{TRUE})$, respectively. The top element $1\in M$ stands for the evident proposition.\\
An assignment (or valuation) of a given model $\mathcal{M}$ is a function $\gamma\colon V\rightarrow M$ that extends in the canonical way to a function $\gamma\colon Fm\rightarrow M$, more precisely: $\gamma(\neg\varphi)=\neg \gamma(\varphi)$ and $\gamma(\varphi\rightarrow\psi)=\gamma(\varphi)\rightarrow\gamma(\psi)$ ($=\neg\gamma(\varphi)\vee\gamma(\psi)$). An interpretation is a tuple $(\mathcal{M},\gamma)$ formed by a model $\mathcal{M}$ and an assignment $\gamma\in M^V$. The relation of satisfaction is defined as follows:
\begin{equation*}
(\mathcal{M},\gamma)\vDash\varphi :\Leftrightarrow\gamma(\varphi)\in \mathit{TRUE}.
\end{equation*}
Of course, the definition extends to sets of formulas as follows: $(\mathcal{M},\gamma)\vDash\varPhi :\Leftrightarrow (\mathcal{M},\gamma)\vDash\varphi$ for all $\varphi\in\varPhi$.
\end{definition}

\begin{definition}\label{550}
The relation of logical consequence is defined in the usual way:
\begin{equation*}
\varPhi\Vdash\varphi :\Leftrightarrow \mathit{Mod}(\varPhi)\subseteq \mathit{Mod}(\{\varphi\})
\end{equation*}
where for any set of formulas $\varPsi$, $\mathit{Mod}(\varPsi)$ denotes the class of all intepretations satisfying $\varPsi$.
\end{definition}

The following useful observations follow from well-known facts about Boolean algebras (see, e.g. \cite{chazak}).

\begin{fact}\label{575}
Let $\mathcal{B}$ be a Boolean algebra with universe $M$. Then the following hold.
\begin{enumerate}
    \item Any proper filter is the intersection of all ultrafilters containing it.
    \secret{\item Let $F$ be a filter and $a,b \in M$. Then $a \rightarrow b \in F$ iff for any ultrafilter $U$ containing $F$, $a \in U$ implies $b \in U$.}
    \item For all $a,b \in M$, $a \leq b$ iff for any ultrafilter $U$, $a \in U$ implies $b \in U$.
\end{enumerate}
\end{fact}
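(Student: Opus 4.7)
My plan is to treat both items as standard consequences of the Boolean prime filter (ultrafilter) theorem, which guarantees that every proper filter extends to an ultrafilter.

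For item (1), one inclusion is immediate: every ultrafilter extending $F$ contains $F$, so $F$ is contained in the intersection. For the reverse inclusion I would argue contrapositively. Suppose $a\in M$ with $a\notin F$; I need an ultrafilter $U$ with $F\subseteq U$ and $a\notin U$. Consider the upward closure of $\{f\wedge\neg a : f\in F\}$, that is, the filter $F'$ generated by $F\cup\{\neg a\}$. The key step is to show $F'$ is proper: if $0\in F'$ then $f\wedge\neg a=0$ for some $f\in F$, hence $f\leq a$, which would force $a\in F$ by upward closure, contradicting the assumption. So $F'$ is a proper filter, and by the ultrafilter theorem there is an ultrafilter $U\supseteq F'$. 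Then $\neg a\in U$, and since $U$ is an ultrafilter this gives $a\notin U$, as required.

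For item (2), the forward direction is just upward closure of filters: if $a\leq b$ and $a\in U$, then $b\in U$. For the converse I would simply apply item (1) to the principal filter $\uparrow a := \{x\in M : a\leq x\}$, which is proper since $1\in\uparrow a$ and $0\notin\uparrow a$ (unless $a=0$, in which case $a\leq b$ holds trivially). Observe that for any ultrafilter $U$, $\uparrow a\subseteq U$ iff $a\in U$. By (1),
\begin{equation*}
\uparrow a \;=\; \bigcap\{U : U\text{ ultrafilter}, a\in U\}.
\end{equation*}
The hypothesis says $b$ lies in every such $U$, so $b\in\uparrow a$, i.e.\ $a\leq b$.

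There is no real obstacle here beyond the invocation of the ultrafilter theorem, which is the standard Zorn-style extension argument and is exactly the sort of thing one expects to cite rather than reprove. The only point requiring a line of care is verifying that the filter generated by $F\cup\{\neg a\}$ is proper when $a\notin F$; everything else is formal manipulation of filter properties.
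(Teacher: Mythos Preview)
Your argument is correct: the contrapositive construction for item~(1) via the filter generated by $F\cup\{\neg a\}$ is the standard one, and reducing item~(2) to item~(1) through the principal filter $\uparrow a$ (with the trivial case $a=0$ handled separately) is clean and correct.

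As for comparison: the paper does not actually supply a proof of this fact. It is stated as a ``Fact'' with the remark that it ``follow[s] from well-known facts about Boolean algebras'' and a citation to Chagrov--Zakharyaschev. So you have done more than the paper does here; your proof is the expected textbook argument and there is nothing to add.
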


\begin{theorem}[Soundness]\label{600}
For any set of formulas $\varPhi\cup\{\varphi\}$, we have
\begin{equation*}
\varPhi\vdash\varphi \Rightarrow \varPhi\Vdash\varphi.
\end{equation*}
\end{theorem}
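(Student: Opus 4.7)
The plan is to proceed by induction on the length of a derivation witnessing $\varPhi\vdash\varphi$. Fix an interpretation $(\mathcal{M},\gamma)\in\mathit{Mod}(\varPhi)$. The $\varphi\in\varPhi$ case is trivial. For modus ponens, if $\gamma(\psi),\gamma(\psi\to\chi)\in\mathit{TRUE}$, then since $\mathit{TRUE}$ is an ultrafilter it is closed under the Boolean operation $a\mapsto(\neg a\vee b)$, so $\gamma(\chi)\in\mathit{TRUE}$. The only non-routine step is axiom necessitation, for which it is not enough that axioms are merely satisfied; we need the stronger statement that every axiom evaluates to the top element $1$ under every assignment $\gamma$. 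Given this, AN is immediate: if $\gamma(\psi)=1$ then $\gamma(\square\psi)=\square 1 = 1\in\mathit{TRUE}$.

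Thus the real work is verifying that each axiom (i)--(x) is \emph{universally valid} in the sense $\gamma(\text{axiom})=1$ for every $\gamma$. The central tool is Fact~\ref{575}(2), which reduces $\gamma(\alpha\to\beta)=1$, i.e.\ $\gamma(\alpha)\le\gamma(\beta)$, to: for every ultrafilter $U$, $\gamma(\alpha)\in U$ implies $\gamma(\beta)\in U$. Writing $a=\gamma(\varphi)$, $b=\gamma(\psi)$: axiom (i) holds because Boolean algebras validate all classical tautologies. Axioms (ii) and (iii) follow at once from the defining inclusions $\mathit{KNOW}(U)\subseteq U$ and $\mathit{KNOW}(U)\subseteq\mathit{BEL}(U)$. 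For the distribution axioms (vii)--(ix) I would argue via filters: if $K(a\to b),Ka\in U$, then $a\to b$ and $a$ both lie in the filter $\mathit{KNOW}(U)$, and since $(a\to b)\wedge a\le b$, upward closure gives $b\in\mathit{KNOW}(U)$, i.e.\ $Kb\in U$; axiom (viii) is identical with $B$; axiom (ix) uses that $\square a=1$ and $\square(a\to b)=1$ force $a=1$ and $a\le b$, hence $b=1$ and $\square b=1$.

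The purely modal axioms (iv) and (v) are immediate from the two-valuedness of $\square$: a case split on whether $\square a=0$ or $\square a=1$ trivializes each of them. Axioms (vi) and (x) are where one must be a touch careful, and constitute the main (though still minor) obstacle. For (vi), observe that every filter contains $1$; thus $K1\in U$ for every ultrafilter $U$, so by Fact~\ref{575}(2) $1\le K1$, i.e.\ $K1=1$. Combining with $\square a=1\Rightarrow a=1$ yields $\gamma(\square\varphi\to K\varphi)=1$. For (x), properness of $\mathit{BEL}(U)$ gives $0\notin\mathit{BEL}(U)$, so $B0\notin U$, hence $\neg B 0\in U$ for every ultrafilter $U$, and Fact~\ref{575}(2) again delivers $\neg B 0=1$; since $\gamma(\bot)=0$, the axiom evaluates to $1$. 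With all axioms universally valid, the induction closes.
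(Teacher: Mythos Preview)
Your proof is correct and follows essentially the same strategy as the paper's: reduce soundness to showing every axiom evaluates to $1$ under every assignment (so that AN is valid), and verify each axiom via the ultrafilter characterization of the Boolean order (Fact~\ref{575}) together with the filter properties of $\mathit{KNOW}(U)$ and $\mathit{BEL}(U)$. If anything, your treatment is slightly more complete, since you explicitly verify axiom~(x) using properness of $\mathit{BEL}(U)$, which the paper's proof leaves implicit.
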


\begin{proof}
In order to prove that the calculus is sound w.r.t. our algebraic semantics it is enough to show that all axioms and rules of inference are sound. For the rule of modus ponens, we can argue that since the set $TRUE$ is a filter, for every $a,b \in M$, if $a \in TRUE$ and $a \rightarrow b \in TRUE$, then $b \in TRUE$. In view of the rule of axiom necessitation, we have particularly to show that all axioms are interpreted by the top element of any given Boolean algebra and under any valuation. This is clear for the axioms having the form of a theorem of $\mathit{CPC}$, they are always interpreted by the top element of any Boolean algebra. As for the other axioms, let $\mathcal{M}$ be a model  with top element $1$ and a valuation $\gamma\colon V\rightarrow M$.

Consider the axiom $K\varphi\rightarrow \varphi$. Then we have $\gamma(K\varphi\rightarrow\varphi)= K\gamma(\varphi)\rightarrow\gamma(\varphi)=1$ iff $K\gamma(\varphi)\le\gamma(\varphi)$. By the definition of a model: $K\gamma(\varphi)\in U$ implies $\gamma(\varphi)\in \mathit{KNOW}(U)$ implies $\gamma(\varphi)\in U$, for any ultrafilter $U$. By Lemma \ref{575}(iii), it follows that $K\gamma(\varphi)\le\gamma(\varphi)$. For $K\varphi\rightarrow B\varphi$, we can apply the same argument, since $\mathit{KNOW}(U) \subseteq \mathit{BEL}(U)$ and then $B\gamma(\varphi) \in U$.

For $K(\varphi\rightarrow\psi) \rightarrow (K\varphi\rightarrow K\psi)$ we give a similar argument. Let $U$ be an ultrafilter and supose $\gamma (K(\varphi\rightarrow\psi)) \in U$. If $\gamma(K\varphi) \in U$, note that both $\gamma(\varphi)$ and $\gamma (\varphi\rightarrow\psi)$ are in $\mathit{KNOW}(U)$. Therefore this also holds for $\gamma(\psi)$, since $\mathit{KNOW}(U)$ is a filter. This means $\gamma(K\psi) \in U$ and we can say $\gamma(K\varphi\rightarrow K\psi) \in U$. Finally, we showed that $\gamma(K(\varphi\rightarrow\psi)) \in U$ implies $\gamma(K\varphi\rightarrow K\psi) \in U$ for any ultrafilter $U$ and therefore $\gamma(K(\varphi\rightarrow\psi)) \leq \gamma(K\varphi\rightarrow K\psi)$. The proof for $B(\varphi\rightarrow\psi) \rightarrow (B\varphi\rightarrow B\psi)$ is analogous.

As for $\square(\varphi\rightarrow\psi) \rightarrow (\square\varphi\rightarrow\square\psi)$, if the precedent is evaluated to $0$, the proof is clear. Assuming the opposite, i.e. $\gamma(\square(\varphi\rightarrow\psi)) = 1$, we have to show $\gamma(\square\varphi\rightarrow\square\psi) = 1$. From the definition of $\square$, we have $\gamma(\varphi\rightarrow\psi) = 1$ and therefore $\gamma(\varphi)\leq\gamma(\psi)$. As before, assume $\gamma(\square\varphi) = 1$. Then $\gamma(\varphi) = 1$ and also $\gamma(\psi) = 1$, leading to $\gamma(\square\varphi\rightarrow\square\psi) = 1$.

Finally, for $\square\varphi\rightarrow K\varphi$, assuming $\gamma(\square\varphi) = 1$ we have $\gamma(\varphi) = 1$. By Lemma \ref{575}(i), the set $\{1\}$ is the intersection of all ultrafilters. Then if $\gamma(K\varphi) \neq 1$, there is an ultrafilter $U$ such that $\gamma(K\varphi) \notin U$. In other words, $1 = \gamma(\varphi)\notin \mathit{KNOW}(U)$, which is impossible since $\mathit{KNOW}(U)$ is a filter. Therefore it must hold that $\gamma(K\varphi) = 1$.

The soundness of axioms $\square\varphi \rightarrow \square\square\varphi$ and $\neg\square\varphi \rightarrow \square\neg\square\varphi$ are clear by the definition of $\square$ in a model.

Soundness of the calculus now follows by induction on the length of derivations.
\end{proof}

We define the following identity connective $\equiv$ on $Fm$:\footnote{Recall that in modal logic $S5$, the formulas $\square(\varphi\rightarrow\psi)\wedge \square(\psi\rightarrow\varphi)$ and $\square((\varphi\rightarrow\psi)\wedge (\psi\rightarrow\varphi))$ are logically equivalent as $\square$ distributes over conjunction.}
\begin{equation*}
\varphi\equiv\psi :=\square(\varphi\leftrightarrow\psi).
\end{equation*}

Intuitively, $\varphi\equiv\psi$ says that $\varphi$ and $\psi$ denote the same proposition. We will call $\equiv$ the connective of propositional identity. In fact, we have the following result:

\begin{lemma}\label{620}
For any model $\mathcal{M}$ and any assignment $\gamma\colon V\rightarrow M$,
\begin{equation*}
(\mathcal{M},\gamma)\vDash\varphi\equiv\psi \Leftrightarrow \gamma(\varphi)=\gamma(\psi).
\end{equation*}
\end{lemma}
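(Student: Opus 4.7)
The plan is to unfold the definitions and reduce the equivalence to a purely Boolean-algebraic fact about biconditional. Everything hinges on the very restrictive clause (i) in Definition \ref{500}, which forces $\square$ to act on the Boolean algebra as the characteristic function of the top element.

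First I would unfold the left-hand side. By definition of $\equiv$ and of the satisfaction relation, $(\mathcal{M},\gamma)\vDash\varphi\equiv\psi$ means $\gamma(\square(\varphi\leftrightarrow\psi))\in\mathit{TRUE}$, i.e., writing $a:=\gamma(\varphi\leftrightarrow\psi)\in M$, that $\square a \in \mathit{TRUE}$. Clause (i) of Definition \ref{500} tells us that $\square a\in\{0,1\}$, with $\square a = 1$ precisely when $a = 1$. Since $\mathit{TRUE}$ is an \emph{ultra}filter (in particular proper, so $0\notin\mathit{TRUE}$, and containing $1$), we have $\square a \in \mathit{TRUE}$ iff $\square a = 1$ iff $a = 1$. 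Thus the statement reduces to $\gamma(\varphi\leftrightarrow\psi)=1$.

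Next I would use that $\gamma$ is a Boolean homomorphism on the propositional fragment, so $\gamma(\varphi\leftrightarrow\psi)=(\gamma(\varphi)\rightarrow\gamma(\psi))\wedge(\gamma(\psi)\rightarrow\gamma(\varphi))$ in $\mathcal{B}$. Standard Boolean algebra gives, for any $a,b\in M$, that $a\rightarrow b = 1$ iff $a\le b$, and hence the biconditional $(a\rightarrow b)\wedge(b\rightarrow a)=1$ iff $a\le b$ and $b\le a$, iff $a=b$. Applying this with $a=\gamma(\varphi)$ and $b=\gamma(\psi)$ yields $\gamma(\varphi\leftrightarrow\psi)=1$ iff $\gamma(\varphi)=\gamma(\psi)$, which chains with the previous reduction to give the desired equivalence.

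There is no real obstacle here: the only point that might look delicate is ensuring that ``$\square a\in\mathit{TRUE}$'' collapses to ``$a=1$'', but this is immediate from the two-valued definition of $\square$ together with properness of the ultrafilter $\mathit{TRUE}$. No appeal to Fact \ref{575} or to the filters $\mathit{KNOW}(U),\mathit{BEL}(U)$ is needed; this lemma uses only the algebraic behaviour of $\square$ on the top element, which is exactly why $\equiv$ can legitimately be read as propositional identity in any model of $\mathit{S5BKE}$.
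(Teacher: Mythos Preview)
Your proof is correct; the paper actually states this lemma without proof, and your argument is precisely the straightforward unfolding one would expect: use clause~(i) of Definition~\ref{500} to reduce $\square a\in\mathit{TRUE}$ to $a=1$, and then the Boolean-algebra identity $a\leftrightarrow b=1\iff a=b$.
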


Also on the syntactical level, the connective $\equiv$ behaves as expected: 

\begin{theorem}\label{640}
The Substitution Principle holds, i.e.\ the following scheme is a theorem:\footnote{$\varphi[x:=\psi]$ is the result of substituting $\psi$ for every ocorrence of variable $x$ in $\varphi$.}
\begin{equation*}
    (\varphi\equiv\psi)\rightarrow (\chi[x:=\varphi]\equiv\chi[x:=\psi]).
\end{equation*}
\end{theorem}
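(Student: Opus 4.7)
The plan is to proceed by structural induction on $\chi$.

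For the base case, $\chi$ is a propositional variable $y$. If $y=x$, the conclusion coincides with the hypothesis; if $y\ne x$, both substitutions leave $\chi$ unchanged, and it suffices to derive $y\equiv y$, i.e.\ $\square(y\leftrightarrow y)$, which follows by axiom necessitation applied to the propositional tautology $y\leftrightarrow y$.

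For the inductive step, I plan to reduce everything to the following replacement-under-$\square$ facts, valid for all formulas $\alpha,\beta,\alpha',\beta'$:
\begin{align*}
&\vdash \square(\alpha\leftrightarrow\beta)\rightarrow\square(\neg\alpha\leftrightarrow\neg\beta),\\
&\vdash \square(\alpha\leftrightarrow\alpha')\wedge\square(\beta\leftrightarrow\beta')\rightarrow\square((\alpha\rightarrow\beta)\leftrightarrow(\alpha'\rightarrow\beta')),\\
&\vdash \square(\alpha\leftrightarrow\beta)\rightarrow\square(\heartsuit\alpha\leftrightarrow\heartsuit\beta)\text{ for }\heartsuit\in\{\square,K,B\}.
\end{align*}
Each is derived by the same uniform recipe: take the relevant classical tautology (for $\neg$ and $\rightarrow$) or the relevant distribution axiom, (vii), (viii), (ix), for $K,B,\square$, which is already an axiom and therefore admits axiom necessitation; then use (ix) together with axiom (iv), $\square\varphi\rightarrow\square\square\varphi$, to shuffle the outer $\square$ into position. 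A small bookkeeping lemma, the equivalence $\square(P\wedge Q)\leftrightarrow\square P\wedge\square Q$---itself derived by axiom necessitation on the tautologies $(P\wedge Q)\rightarrow P$, $(P\wedge Q)\rightarrow Q$ and $P\rightarrow(Q\rightarrow(P\wedge Q))$ together with (ix)---lets me reassemble the two one-way implications extracted from $\square(\alpha\leftrightarrow\beta)$ into the desired biconditional under $\square$. With these facts in hand, the induction is immediate: apply the induction hypothesis to each immediate subformula of $\chi$ and invoke the matching replacement fact.

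The main obstacle, I expect, lies in the modal cases $\heartsuit\in\{\square,K,B\}$. The calculus has only \emph{axiom} necessitation, so one cannot simply necessitate the derived biconditional $\heartsuit\alpha\leftrightarrow\heartsuit\beta$. For $\heartsuit=\square$, distribution (ix) is itself an axiom and can be necessitated directly, after which (iv) absorbs the extra $\square$. For $\heartsuit=K$, the key intermediate move is to first lift $\square(\alpha\rightarrow\beta)$ to $\square K(\alpha\rightarrow\beta)$ via the necessitated form of axiom (vi) combined with (iv) and (ix); only then can the necessitated form of (vii) be invoked under the outer $\square$ to obtain $\square(K\alpha\rightarrow K\beta)$. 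The case $\heartsuit=B$ is handled symmetrically, going through the necessitated form of (viii) after the corresponding lift to $\square B(\alpha\rightarrow\beta)$ via (vi) and (iii).
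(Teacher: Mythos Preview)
The paper states this theorem without proof, so there is no argument of its own to compare against. Your proposal is correct and is the natural approach: structural induction on $\chi$, with the replacement-under-$\square$ lemmas doing the real work. You have correctly identified the only genuine subtlety, namely that the calculus admits only \emph{axiom} necessitation rather than full necessitation, and your workaround is the right one. For $\heartsuit=\square$ you necessitate the axiom (ix) itself, distribute via (ix), and use (iv) to collapse $\square\square(\alpha\rightarrow\beta)$; for $\heartsuit=K$ you first pass from $\square(\alpha\rightarrow\beta)$ to $\square K(\alpha\rightarrow\beta)$ via the necessitated (vi), (ix) and (iv), then invoke the necessitated (vii); for $\heartsuit=B$ you add one more step through the necessitated (iii). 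The bookkeeping lemma $\square(P\wedge Q)\leftrightarrow\square P\wedge\square Q$ lets you split and recombine the two directions of the biconditional under $\square$, and the chaining with the induction hypothesis is pure propositional reasoning. Nothing is missing.
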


\begin{theorem}\label{650}
$\square\varphi\leftrightarrow (\varphi\equiv\top)$ is a theorem. (Actually, $\square\varphi\equiv (\varphi\equiv\top)$ is a theorem.)
\end{theorem}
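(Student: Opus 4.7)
The plan is to give a syntactic derivation using AN, the distribution axiom (ix), and the $S5$-style axiom (iv), treating the stated biconditional first and then bootstrapping to the parenthetical identity.

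For $\vdash\square\varphi\leftrightarrow(\varphi\equiv\top)$, I unfold the definition of $\equiv$, so that the target becomes $\square\varphi\leftrightarrow\square(\varphi\leftrightarrow\top)$. Both implications $\varphi\rightarrow(\varphi\leftrightarrow\top)$ and $(\varphi\leftrightarrow\top)\rightarrow\varphi$ are classical tautologies, hence axioms via (i). Applying AN to each, and then the distribution axiom (ix) with MP, yields $\square\varphi\rightarrow\square(\varphi\leftrightarrow\top)$ and its converse, whose conjunction is the claim.

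For the parenthetical $\vdash\square\varphi\equiv(\varphi\equiv\top)$, the target is $\square(\square\varphi\leftrightarrow\square(\varphi\leftrightarrow\top))$. The naive move of applying AN to the biconditional just proved is blocked, since AN is restricted to axioms. To sidestep this I would first establish the derived rule
\[ \vdash\square(\alpha\rightarrow\beta)\rightarrow\square(\square\alpha\rightarrow\square\beta). \]
Its derivation: AN applied to (ix) gives $\square(\square(\alpha\rightarrow\beta)\rightarrow(\square\alpha\rightarrow\square\beta))$; a fresh instance of (ix) with MP produces $\square\square(\alpha\rightarrow\beta)\rightarrow\square(\square\alpha\rightarrow\square\beta)$; then axiom (iv) $\square\chi\rightarrow\square\square\chi$ absorbs the double box, yielding the rule. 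Applying it to the two tautological implications from the previous paragraph delivers $\square(\square\varphi\rightarrow\square(\varphi\leftrightarrow\top))$ and $\square(\square(\varphi\leftrightarrow\top)\rightarrow\square\varphi)$. Finally, the standard $\vdash(\square\chi_1\wedge\square\chi_2)\rightarrow\square(\chi_1\wedge\chi_2)$, obtainable from AN on $\chi_1\rightarrow(\chi_2\rightarrow\chi_1\wedge\chi_2)$ together with two uses of (ix), merges these into $\square(\square\varphi\leftrightarrow\square(\varphi\leftrightarrow\top))$, which is precisely $\square\varphi\equiv(\varphi\equiv\top)$.

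The main technical obstacle is the derived rule above: it compensates for the absence of an unrestricted necessitation rule and is the only nonroutine ingredient; everything else is propositional manipulation supported by AN and (ix). As a stylistic remark, the weaker biconditional is in fact a corollary of the identity version via $\square\chi\rightarrow\chi$ (derivable from axioms (vi) and (ii)), so one could streamline the presentation by proving only the stronger statement.
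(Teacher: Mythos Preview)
Your proof is correct and follows the same route as the paper: start from the classical tautology $\varphi\leftrightarrow(\varphi\leftrightarrow\top)$, then use axiom necessitation and distribution. The paper's proof is a two-line sketch that leaves the parenthetical identity implicit, whereas you carefully unpack the stronger claim and correctly note (and circumvent) the restriction of AN to axioms via the derived rule $\square(\alpha\rightarrow\beta)\rightarrow\square(\square\alpha\rightarrow\square\beta)$; this is exactly the detail the paper suppresses.
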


\begin{proof}
$\varphi\leftrightarrow (\varphi\leftrightarrow\top)$ is a theorem of $\mathit{CPC}$. Now, apply axiom necessitation and distribution.
\end{proof}

\begin{theorem}[Completeness]\label{660}
For any set $\varPhi\cup\{\varphi\}\subseteq Fm$, we have
\begin{equation*}
\varPhi\Vdash\varphi\Rightarrow\varPhi\vdash\varphi.
\end{equation*}
\end{theorem}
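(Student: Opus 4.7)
The plan is to prove the contrapositive via a Henkin--Lindenbaum construction. Suppose $\varPhi \not\vdash \varphi$; then $\varPhi \cup \{\neg\varphi\}$ is consistent, and a standard Zorn argument extends it to a maximally consistent set $\Psi$. Before beginning the algebraic construction I would establish the derived rule of $\square$-necessitation, $\vdash \chi \Rightarrow \vdash \square\chi$, by induction on proof length: the base case uses AN, modus-ponens steps use the distribution axiom (ix), and the case of a prior application of AN is absorbed via axiom (iv). Combined with axiom (vi), this immediately yields $\vdash \chi \Rightarrow \vdash K\chi, \vdash B\chi$, which is essential for the filter-closure arguments below.

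Taking the hint provided by Lemma \ref{620}, I would define $\varphi \sim_\Psi \psi$ iff $\varphi \equiv \psi \in \Psi$, i.e. $\square(\varphi \leftrightarrow \psi) \in \Psi$. Tautologies together with AN make $\sim_\Psi$ an equivalence relation, and Theorem \ref{640} (Substitution) makes it a congruence for all connectives and for the operators $\square, K, B$. Let $M := Fm/{\sim_\Psi}$ be equipped with canonical Boolean operations, canonical operators $\square[\chi] := [\square\chi]$, $K[\chi] := [K\chi]$, $B[\chi] := [B\chi]$, and designated subset $\mathit{TRUE} := \{[\chi] : \chi \in \Psi\}$. The Boolean algebra laws transfer from classical tautologies via AN, and maximality of $\Psi$ makes $\mathit{TRUE}$ a proper ultrafilter.

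The nontrivial part is verifying the model conditions. For the two-valued shape of $\square$, Theorem \ref{650} gives $[\chi] = 1 \Leftrightarrow \square\chi \in \Psi$; axiom (iv) then yields $\square\chi \in \Psi \Rightarrow \square\square\chi \in \Psi$, i.e. $[\square\chi] = 1$, while if $\square\chi \notin \Psi$ then $\neg\square\chi \in \Psi$ by maximality and axiom (v) gives $\square\neg\square\chi \in \Psi$, hence $[\square\chi] = [\bot] = 0$. The main obstacle I anticipate is the filter condition on $\mathit{KNOW}(U)$ and $\mathit{BEL}(U)$ for \emph{every} ultrafilter $U$ of $M$, not only for $\mathit{TRUE}$. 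Each such $U$ corresponds to a maximally consistent set $\Delta := \{\chi : [\chi] \in U\}$, and the crucial rigidity observation is that $\Delta$ agrees with $\Psi$ on all $\square$-formulas (since $[\square\chi] \in \{0,1\}$ is determined by $\Psi$). This permits transfer of filter properties into $\Delta$: upward closure follows from $[\varphi] \leq [\psi] \Leftrightarrow \square(\varphi \to \psi) \in \Psi \Leftrightarrow \square(\varphi \to \psi) \in \Delta$ together with axioms (vi) and (vii); closure under meet from the derivable $(K\varphi \wedge K\psi) \to K(\varphi \wedge \psi)$; properness from factivity (ii) and from $\neg B\bot$ (x); and the inclusions $\mathit{KNOW}(U) \subseteq U$ and $\mathit{KNOW}(U) \subseteq \mathit{BEL}(U)$ from axioms (ii) and (iii). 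Setting $\gamma(x) := [x]$, a routine induction gives $\gamma(\chi) = [\chi]$, whence $(\mathcal{M},\gamma) \vDash \chi \Leftrightarrow \chi \in \Psi$; since $\varPhi \cup \{\neg\varphi\} \subseteq \Psi$, this interpretation witnesses $\varPhi \not\Vdash \varphi$.
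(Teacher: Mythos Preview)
Your proposal is correct and follows essentially the same Lindenbaum--Tarski construction as the paper: extend to a maximally consistent set, quotient by $\approx$ defined via $\equiv$, and verify the model conditions using Theorem~\ref{650} together with axioms (iv) and (v) for the shape of $\square$. You are in fact more careful than the paper on two points: you explicitly derive full $\square$-necessitation (which the paper uses tacitly), and you spell out the verification of condition~(ii) for \emph{arbitrary} ultrafilters via the rigidity of $\square$-formulas, whereas the paper simply asserts that this ``follows immediately from the axioms.''
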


\begin{proof}
Suppose $\varPhi\nvdash\varphi$. Exactly as in $\mathit{CPC}$, it follows that $\varPhi\cup\{\neg\varphi\}$ is consistent. If there is an interpretation satisfying that set, then $\varPhi\Vdash\varphi$ is impossible. So it is enough to show that any consistent set is satisfiable. Let $\varPsi$ be a consistent set of formulas. Zorn's Lemma implies the existence of a maximal consistent extension $\varGamma\supseteq\varPsi$. We define a relation on $Fm$ by 
\begin{equation*}
\varphi\approx\psi :\Leftrightarrow \varGamma\vdash\varphi\equiv\psi.
\end{equation*}

Then $\approx$ is a congruence relation, i.e.\ it is an equivalence relation which is compatible with the logical connectives and modal and epistemic operators of the language. We write $\overline{\varphi}$ for the congruence class of $\varphi$ modulo $\approx$, and $\overline{\varPhi}:=\{\overline{\varphi}\mid\varphi\in\varPhi\}$. In the following, we construct a model $\mathcal{M}$ with universe $M:=\overline{Fm}$, $0:=\overline{\bot}$, $1:=\overline{\top}$ and designated ultrafilter $\mathit{TRUE}:=\overline{\varGamma}$. Since $\approx$ is a congruence relations, the operations $\overline{\varphi}\vee\overline{\psi} :=\overline{\varphi\vee\psi}$, $\overline{\varphi}\wedge\overline{\psi} :=\overline{\varphi\wedge\psi}$, $\neg\overline{\varphi}:=\overline{\neg\varphi}$, $\square\overline{\varphi}:=\overline{\square\varphi}$,
$B\overline{\varphi}:=\overline{B\varphi}$,
$K\overline{\varphi}:=\overline{K\varphi}$
are  well-defined. Also, if $\varphi\in\varGamma$ and $\varphi\approx\psi$, then $\psi\in\varGamma$. We show that
\begin{equation*}
\mathcal{M}=(M,\mathit{TRUE},\vee,\wedge,\neg,0,1,\square,B,K)
\end{equation*}
is a model. $\mathcal{M}$ is a Boolean algebra because all equations that define the class of Boolean algebras are satisfied. In fact, for every classical tautology of the form $\varphi\leftrightarrow\psi$, $\varphi\equiv\psi$ is a theorem (apply the rule of axiom necessitation) and therefore we get $\varphi\approx\psi$. In particular, $\mathit{TRUE}$ is an ultrafilter of the Boolean algebra. It remains to check the conditions (i) and (ii) of Definition \ref{500}.\\
Condition (i) holds: Recall that by Theorem \ref{650}, $\square\varphi\leftrightarrow (\varphi\equiv\top)$ is a theorem and therefore an element of $\varGamma$. So $\overline{\varphi}=\overline{\top}$ iff $\varphi\approx\top$ iff $\varGamma\vdash\varphi\equiv\top$ iff $\square\varphi\in\varGamma$ iff $\overline{\square\varphi}\in\overline{\varGamma}$ iff  $\square\overline{\varphi}\in\overline{\varGamma}$. Thus,
\begin{equation*}
    \square\overline{\varphi}\in\overline{\varGamma}\Leftrightarrow \overline{\varphi}=\overline{\top},
\end{equation*}
for all $\varphi\in Fm$. Now, it is enough to show that the following hold:
\begin{equation*}
\begin{split}
&\square\overline{\varphi}\in\overline{\varGamma} \Rightarrow \square\overline{\varphi}=\overline{\top}\\
&\square\overline{\varphi}\notin\overline{\varGamma} \Rightarrow \square\overline{\varphi}=\overline{\bot}.
\end{split}
\end{equation*}
By axiom (iv) and the above equivalence, we get: $\square\overline{\varphi}\in\overline{\varGamma}$ implies $\square\varphi\in\varGamma$ implies $\square\square\varphi\in\varGamma$ implies $\square\overline{\square\varphi}\in\overline{\varGamma}$ implies $\overline{\square\varphi}=\overline{\top}$ implies $\square\overline{\varphi}=\overline{\top}$.\\
By the maximal consisistency of $\varGamma$, axiom (v) and the above equivalence, we get:
$\square\overline{\varphi}\notin\overline{\varGamma}$ implies $\neg\square\overline{\varphi}\in\overline{\varGamma}$ implies $\square\overline{\neg\square\varphi}\in\overline{\varGamma}$ implies $\neg\overline{\square\varphi}=\overline{\top}$ implies $\overline{\square\varphi}=\overline{\bot}$ implies $\square\overline{\varphi}=\overline{\bot}$.\\
Then condition (i) follows:
\begin{equation*}
\begin{split}
\square\overline{\varphi} =
\begin{cases}
&\overline{\top},\text{ if }\overline{\varphi}=\overline{\top}\\
&\overline{\bot},\text{ else}.
\end{cases}
\end{split}
\end{equation*}
Condition (ii) follows immediately from the axioms. Recall that $\mathit{KNOW}(U)$ is a filter iff the following holds: $\overline{\top}\in\mathit{KNOW}(U)$ and $\overline{\varphi}\rightarrow\overline{\psi},\overline{\varphi}\in \mathit{KNOW}(U)$ implies $\overline{\psi}\in\mathit{KNOW}(U)$.
\end{proof} 

\begin{definition}\label{1000}
A frame $\mathcal{F}=(W,P,E_K,E_B,w_T)$ is given by
\begin{itemize}
    \item a non-empty set $W$ of worlds with a designated world $w_T\in W$,
    \item a set $P\subseteq Pow(W)$ which contains $\varnothing$ and $W$ and is closed under the following conditions: if $A,B\in P$, then the sets $A\cap B$, $A\cup B$, $\sim A := P\smallsetminus A$, $KA := \{w\in W\mid A\in E_K(w)\}$ and $BA := \{w\in W\mid A\in E_B(w)\}$ are elements of $P$,
    \item a function $E_K\colon W\rightarrow Pow(P)$ such that for each $w\in W$, $E_K(w)\subseteq P$ is a filter on $P$, i.e. $E_K(w)$ is a non-empty set satisfying the following conditions: if $A,B\in E_K(w)$, then $A\cap B\in E_K(w)$; and if $A\in E_K(w)$ and $A\subseteq B\in P$, then $B\in E_K(w)$; and furthermore: for each $w\in W$, $E_K(w)$ satisfies the following factivity condition: $A\in E_K(w)$ implies $w\in A$. 
    \item a function $E_B\colon W\rightarrow Pow(P)$ such that for each $w\in W$, $E_B$ is a proper filter on $P$ and $E_K(w)\subseteq E_B(w)$.
\end{itemize}
\end{definition}

Intuitively, $P$ is the set of all propositions, $E_K(w)$ is the set of propositions known at world $w$, $E_B(w)$ is the set of propositions believed at world $w$, $KA$ is the proposition saying ``$A$ is known" and $BA$ is the proposition saying ``$A$ is believed".

\begin{definition}\label{1100}
Given a frame $\mathcal{F}=(W,P,E_K,E_B)$, an assignment (or valuation) is a function $g\colon V\rightarrow P$. The tuple $\mathcal{K}=(\mathcal{F},g)$ is called a frame-based model, and its relation of satisfaction between worlds $w\in W$ and formulas is defined as follows:
\begin{itemize}
    \item $w\vDash x :\Leftrightarrow w\in g(x)$
    \item $w\vDash \neg\varphi :\Leftrightarrow w\nvDash\varphi$
    \item $w\vDash\varphi\rightarrow\psi :\Leftrightarrow w\nvDash\varphi$ or $w\vDash\psi$
    \item $w\vDash\square\varphi :\Leftrightarrow w'\vDash\varphi$, for all $w'\in W$
    \item $w\vDash K\varphi :\Leftrightarrow \varphi^*\in E_K(w)$, where $\varphi^* :=\{w\in W\mid w\vDash\varphi\}$
    \item $w\vDash B\varphi :\Leftrightarrow \varphi^*\in E_B(w)$
\end{itemize}
Finally, for any frame-based model $\mathcal{K}=(\mathcal{F},g)$ and any formula $\varphi\in Fm$, we define:
\begin{equation*}
    \mathcal{K}\vDash\varphi :\Leftrightarrow w_T\vDash\varphi.
\end{equation*}
The frame-based relation of logical consequence then is given in the obvious way: $\varPhi\Vdash_f\varphi :\Leftrightarrow \mathcal{K}\vDash\varPhi$ implies $\mathcal{K}\vDash\varphi$ for any frame-based model $\mathcal{K}$.
\end{definition}

\begin{remark}\label{1180}
Let $\mathcal{K}=(\mathcal{F},g)$ be a frame-based model with a set $P$ of propositions. We extend $g\colon V\rightarrow P$ to a function $g\colon Fm\rightarrow P$ defining recursively:
\begin{itemize}
\item $g(\neg\varphi) :=P\smallsetminus g(\varphi)$
\item $g(\varphi\rightarrow\psi) := (P\smallsetminus g(\varphi))\cup g(\psi)$
\item $g(K\varphi) := \{w\in W\mid g(\varphi)\in E_K(w)\}$ \item $g(B\varphi) := \{w\in W\mid g(\varphi)\in E_B(w)\}$ \item $g(\square\varphi)=W$ if $g(\varphi)=W$; $g(\square\varphi)=\varnothing$ otherwise.
\end{itemize}
Then by induction on formulas simultaneous for all worlds $w\in W$, one shows the following for any $\varphi\in Fm$ and any $w\in W$: $w\vDash\varphi\Leftrightarrow w\in g(\varphi)$.\\ 
Thus, $g(\varphi)=\{w\in W\mid w\vDash\varphi\}=\varphi^*\in P$. 
\end{remark}

\begin{remark}[The collapsing problem of introspective knowledge]\label{1190}
In contrast to the usual relational semantics of modal epistemic logic found in the literature (cf. \cite{fag, mey}), we do not model knowledge (belief) by means of a specific accessibility relation. Instead, we represent knowledge by a function that assigns to each world sets of propositions. In this way, we avoid that strong forms of knowledge collapse into evidence. Suppose, for instance, knowledge is represented by the usual accessibility relation and satisfies positive and negative introspection, i.e. the $S5$-style axioms. Then knowledge is given by the universal relation exactly as evidence. That is, knowledge collapses into evidence. This, however, is not derivable from the axioms. Our modeling of knowledge avoids the collapsing problem on the semantic level.  
\end{remark}

\begin{theorem}\label{1200}
For every (algebraic) model $\mathcal{M}$ and every assignment $\gamma\colon V\rightarrow M$ there is a frame-based model $\mathcal{K}=(\mathcal{F},g)$ such that for all formulas $\varphi\in Fm$,
\begin{equation*}
    (\mathcal{M},\gamma)\vDash\varphi\Leftrightarrow \mathcal{K}\vDash\varphi.
\end{equation*}
\end{theorem}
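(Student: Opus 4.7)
The plan is to carry out a Stone-style representation. Given $\mathcal{M}=(M,\mathit{TRUE},\vee,\wedge,\neg,0,1,\square,B,K)$ and $\gamma\colon V\to M$, I would let $W$ be the set of all ultrafilters of the Boolean reduct of $\mathcal{M}$, take $w_T:=\mathit{TRUE}$, and for each $a\in M$ write $\hat a:=\{U\in W\mid a\in U\}$. By Fact~\ref{575}(ii) the map $a\mapsto\hat a$ is an injective Boolean homomorphism (and $\hat 1=W$, $\hat 0=\varnothing$). I would then set $P:=\{\hat a\mid a\in M\}$ and define, for each $U\in W$,
\begin{equation*}
E_K(U):=\{\hat a\mid a\in\mathit{KNOW}(U)\},\qquad E_B(U):=\{\hat a\mid a\in\mathit{BEL}(U)\}.
\end{equation*}
Finally I would take $g(x):=\widehat{\gamma(x)}$ and let $\mathcal{K}=(\mathcal{F},g)$ with $\mathcal{F}=(W,P,E_K,E_B,w_T)$.

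The first block of routine checks is that $\mathcal{F}$ really is a frame in the sense of Definition~\ref{1000}. Closure of $P$ under $\cap$, $\cup$ and relative complement is immediate from the fact that $a\mapsto\hat a$ is a Boolean homomorphism. For closure under the operations $A\mapsto KA$ and $A\mapsto BA$ one computes, for $A=\hat a$, that $KA=\{U\mid\hat a\in E_K(U)\}=\{U\mid a\in\mathit{KNOW}(U)\}=\{U\mid Ka\in U\}=\widehat{Ka}\in P$, and analogously for $BA$; injectivity of $a\mapsto\hat a$ is what lets one pass from ``$\hat a\in E_K(U)$'' back to ``$a\in\mathit{KNOW}(U)$''. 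That $E_K(U)$ and $E_B(U)$ are proper filters of $P$ follows because $\mathit{KNOW}(U)$ and $\mathit{BEL}(U)$ are proper filters of $\mathcal{M}$ and $a\mapsto\hat a$ preserves order, meets and properness; factivity $\hat a\in E_K(U)\Rightarrow U\in\hat a$ is the image of $\mathit{KNOW}(U)\subseteq U$, and $E_K(U)\subseteq E_B(U)$ is inherited from $\mathit{KNOW}(U)\subseteq\mathit{BEL}(U)$.

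The heart of the argument is the inductive identity $g(\varphi)=\widehat{\gamma(\varphi)}$ for all $\varphi\in Fm$, from which the conclusion follows at once via $(\mathcal{M},\gamma)\vDash\varphi\Leftrightarrow\gamma(\varphi)\in\mathit{TRUE}\Leftrightarrow w_T\in\widehat{\gamma(\varphi)}=g(\varphi)\Leftrightarrow\mathcal{K}\vDash\varphi$. The Boolean cases and the $K$, $B$ cases unfold as above; the only step that is not purely mechanical is the case of $\square$. Here I expect the main subtlety, because the frame-side clause for $\square$ is the two-valued ``$g(\square\varphi)=W$ if $g(\varphi)=W$, else $\varnothing$'', while on the algebraic side $\square$ is defined on $M$. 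The bridge is Fact~\ref{575}(i) applied to the filter $\{1\}$: an element $a\in M$ satisfies $a=1$ iff $a$ lies in every ultrafilter, i.e.\ iff $\hat a=W$. Hence $g(\varphi)=\widehat{\gamma(\varphi)}=W$ iff $\gamma(\varphi)=1$ iff $\square\gamma(\varphi)=1$ (by the model clause for $\square$) iff $\widehat{\square\gamma(\varphi)}=W$; when $\gamma(\varphi)\neq 1$ both sides collapse to $\varnothing$. This yields $g(\square\varphi)=\widehat{\gamma(\square\varphi)}$ and completes the induction.
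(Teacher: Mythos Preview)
Your proposal is correct and follows essentially the same Stone-representation approach as the paper's own proof: the paper writes $m^+$ for your $\hat a$, builds the same $W$, $P$, $E_K$, $E_B$, $w_T$ and $g$, verifies the same closure and filter conditions, and proves the key identity $g(\varphi)=\gamma(\varphi)^+$ by induction (via Remark~\ref{1180}). Your handling of the $\square$ case and the role of injectivity of $a\mapsto\hat a$ is a bit more explicit than the paper's, but the argument is the same.
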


\begin{proof}
Suppose $\mathcal{M}$ is a model and $\gamma\in M^V$ is an assignment. Let $W$ be the set of all ultrafilters of the Boolean algebra underlying $\mathcal{M}$, and let $w_T :=\mathit{TRUE}\in W$ be the designated world. For each $m\in M$, we define 
\begin{equation*}
m^+ :=\{w\in W\mid m\in w\}.
\end{equation*}
Then 
\begin{equation*}
P:=\{m^+\mid m\in M\}
\end{equation*}
is the set of propositions of the desired frame, and
\begin{equation*}
\begin{split}
&E_K(w) := \{m^+\subseteq W\mid m\in KNOW(w)\},\\ 
&E_B(w) := \{m^+\subseteq W\mid m\in BEL(w)\}
\end{split}
\end{equation*}
are the sets of known and believed propositions at $w$, respectively. In fact, for all $m,m'\in M$ we have
\begin{equation*}
    \begin{split}
        &m^+\cap m'^+ = (m\wedge m')^+\\
        &m^+\cup m'^+ = (m\vee m')^+\\
        &\sim (m^+) = (\neg m)^+\\
        &K(m^+) = \{w\in W\mid m^+\in E_K(w)\} = (Km)^+\\
        &B(m^+) = \{w\in W\mid m^+\in E_B(w)\} = (Bm)^+\\
        &W = 1^+\\
        &\varnothing = 0^+
    \end{split}
\end{equation*}
Thus, $P$ satisfies the required closure properties. Then one easily recognizes that also the sets $E_K(w)$ and $E_B(w)$ are closed under intersection. Suppose $m^+\in E_K(w)$ and $m^+\subseteq m'^+$ for some $m'\in M$. Then for any ultrafilter $w'$ of the Boolean algebra underlying model $\mathcal{M}$ we have: $m\in w'$ implies $m'\in w'$. This is equivalent to the condition that $m\rightarrow m'$ belongs to all ultrafilters, i.e. $m\rightarrow m' = 1$. In particular, $m\rightarrow m'\in KNOW(w)$ and therefore $(m\rightarrow m')^+\in E_K(w)$. Since $m^+\in E_K(w)$, we get $m'^+\in E_K(w)$. Thus, $E_K(w)$ is a filter. Similarly, one shows that $E_B(w)$ is a filter. Finally, we show the factivity condition for $E_K(w)$. Suppose $m^+\in E_K(w)$. Then $m\in KNOW(w)\subseteq w$, by definition of $KNOW(w)$. Thus, $w\in m^+$. We have shown that $\mathcal{F}=(W,E_K,E_B,w_T)$ is a frame. The assignment $g\colon V\rightarrow P$ is defined by $g(x) :=\gamma(x)^+$. Then by induction and Remark \ref{1180}, 
\begin{equation*}
    g(\varphi) = \gamma(\varphi)^+,
\end{equation*}
for all $\varphi\in Fm$. Then for any $w\in W$ and $\varphi\in Fm$:
\begin{equation*}
    w\vDash\varphi\Leftrightarrow w\in\varphi^*\Leftrightarrow w\in g(\varphi)\Leftrightarrow w\in \gamma(\varphi)^+\Leftrightarrow \gamma(\varphi)\in w.
\end{equation*}
In particular:
\begin{equation*}
    (\mathcal{F},g)\vDash\varphi\Leftrightarrow w_T\vDash\varphi\Leftrightarrow \gamma(\varphi)\in w_T=\mathit{TRUE}\Leftrightarrow (\mathcal{M},\gamma)\vDash\varphi.
\end{equation*}
\end{proof}

\begin{corollary}\label{1300}
Our logic is complete w.r.t. frame-based semantics: $\varPhi\Vdash_f\varphi$ implies $\varPhi\vdash\varphi$.
\end{corollary}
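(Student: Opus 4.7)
The plan is to argue by contraposition, chaining the algebraic completeness theorem with the transfer theorem from algebraic models to frame-based models. Specifically, I would assume $\varPhi\nvdash\varphi$ and produce a frame-based model that satisfies $\varPhi$ but not $\varphi$, which exactly means $\varPhi\nVdash_f\varphi$.

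The first step is to apply the algebraic completeness theorem (Theorem \ref{660}). By contraposition, $\varPhi\nvdash\varphi$ yields $\varPhi\nVdash\varphi$, so there must exist an interpretation $(\mathcal{M},\gamma)$ with $(\mathcal{M},\gamma)\vDash\varPsi$ for every $\varPsi\in\varPhi$ and $(\mathcal{M},\gamma)\nvDash\varphi$. The second step is to invoke Theorem \ref{1200}, which gives a frame-based model $\mathcal{K}=(\mathcal{F},g)$ constructed from $(\mathcal{M},\gamma)$ via the ultrafilter representation $m\mapsto m^+$, together with the equivalence
\begin{equation*}
(\mathcal{M},\gamma)\vDash\psi \;\Longleftrightarrow\; \mathcal{K}\vDash\psi \qquad \text{for all } \psi\in Fm.
\end{equation*}
Applying this equivalence to every element of $\varPhi$ and to $\varphi$ yields $\mathcal{K}\vDash\varPhi$ together with $\mathcal{K}\nvDash\varphi$, so $\varPhi\nVdash_f\varphi$ by the definition of frame-based consequence.

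I do not expect a real obstacle here, since all of the heavy lifting has already been done: the canonical model construction inside Theorem \ref{660} produces the witnessing algebraic interpretation, and Theorem \ref{1200} packages the translation into the frame-based setting uniformly across all formulas. The only care required is to make sure the transfer equivalence from Theorem \ref{1200} is applied to every hypothesis in $\varPhi$ as well as to $\varphi$, which is immediate because that theorem is formulated for arbitrary $\varphi\in Fm$. No additional lemma or modification to the frame construction should be needed.
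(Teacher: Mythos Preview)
Your proposal is correct and matches the paper's proof essentially line for line: argue by contraposition, use algebraic completeness (Theorem~\ref{660}) to obtain an algebraic interpretation separating $\varPhi$ from $\varphi$, then transfer it to a frame-based model via Theorem~\ref{1200}. No changes are needed.
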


\begin{proof}
Suppose $\varPhi\nvdash\varphi$. Then by completeness w.r.t. algebraic semantics, $\varPhi\nVdash\varphi$. Thus, there is a model $(\mathcal{M},\gamma)$ such that $(\mathcal{M},\gamma)\vDash\varPhi$ and $(\mathcal{M},\gamma)\nvDash\varphi$. Then Theorem \ref{1200} yields a frame-based model with that property. Hence, $\varPhi\nVdash_f\varphi$.
\end{proof}

\end{document}